\documentclass[12pt]{amsart}
\usepackage{amsmath,amsthm,amssymb,amsfonts,verbatim}
\usepackage[margin =1.1in]{geometry}

\usepackage{hyperref}

 \providecommand{\F}{\mathbb{F}}

\setcounter{page}{0}

\parskip=0.5ex
\parindent=4ex

\title[Optimal locally recoverable codes of distance $5$ and $6$]{Explicit construction of optimal locally recoverable codes of distance $5$ and $6$ via binary constant weight codes }

\author{ Lingfei Jin}

\thanks{Lingfei Jin is with Shanghai Key Laboratory of Intelligent Information Processing, School of Computer Science, Fudan University, Shanghai 200433, China. {\it email:} {lfjin@fudan.edu.cn}. }

\newtheorem{lemma}{Lemma}[section]
\newtheorem{theorem}[lemma]{Theorem}
\newtheorem{cor}[lemma]{Corollary}

\newtheorem{ex}[lemma]{Example}
\newtheorem{defn}{Definition}
\newtheorem{rmk}{Remark}

\renewcommand{\epsilon}{\varepsilon}
\renewcommand{\le}{\leqslant}
\renewcommand{\ge}{\geqslant}
\def\F {{\mathbb{F}}}

\def \Xi {{X^{[i]}}}

\newcommand{\Ga}{\alpha}
\newcommand{\Gb}{\beta}
     
\newcommand{\Gd}{\delta}

\newcommand{\Gl}{\lambda}

\def \a {{\bf a}}
\def\b{{\bf b}}

\def \bc {{\bf c}}
\def \bi {{\bf 1}}

\def \bh {{\bf h}}

\def \bu {{\bf u}}
\def \bv {{\bf v}}
\def \bo {{\bf 0}}

\begin{document}

\maketitle

\begin{abstract} It was shown in \cite{GXY18} that the length $n$ of a $q$-ary linear locally recoverable code with distance $d\ge 5$ is upper bounded by $O(dq^3)$. Thus, it is a challenging problem to construct  $q$-ary locally recoverable codes with distance $d\ge 5$ and length approaching the upper bound.  The paper \cite{GXY18} also gave an algorithmic construction of $q$-ary locally recoverable codes with locality $r$ and length $n=\Omega_r(q^2)$ for $d=5$ and $6$, where $\Omega_r$ means that the implicit constant depends on locality $r$. In the present paper, we present an explicit construction of $q$-ary locally recoverable codes of distance $d= 5$ and $6$ via binary constant weight codes. It turns out that (i) our construction is simpler and more explicit; and (ii) lengths of our codes are larger than those given in \cite{GXY18}.
\end{abstract}

\section{Introduction}
Motivated by the problem of designing efficient codes for distributed storage systems,  locally recoverable (or repairable) codes  have recently attracted great attention of researchers. A local repairable code is  a  block code with an additional parameter called {\it locality}.

A block code is called a locally repairable code (LRC for short) with locality $r$ if every symbol in the encoding is a function of
$r$ other symbols (see the precise definition of locally repairable codes in Section \ref{subsec:2.1}). This enables recovery of any single erased symbol in a local fashion by downloading at most $r$ other symbols. On the other hand, one would like the code to have a good minimum distance to enable recovery of many erasures in the worst-case. LRCs have been the subject of extensive study in recent years~\cite{HL07,GHSY12,PKLK12,SRKV13,JMX17,LMX17,FY14,PD14,TB14,TPD16,BTV17,KBTY18}.
LRCs offer a good balance between very efficient erasure recovery in the typical case in distributed storage systems where a single node fails (or becomes temporarily unavailable due to maintenance or other causes), and still allowing recovery of the data from a larger number of erasures and thus safeguarding the data in more worst-case scenarios.

A Singleton-type bound for LRCs relating its
length $n$, dimension $k$, minimum distance $d$ and locality $r$ was first shown in the highly influential work~\cite{GHSY12}. It states that a linear locally repairable code $C$ must obey
\begin{equation}\label{eq:x1}
 d(C)\le n-k-\left\lceil \frac kr\right\rceil+2.
\end{equation}
%
Note that any linear code of dimension $k$ has locality at most $k$, so in the case when $r=k$ the above bound specializes to the classical Singleton bound $d \le n-k+1$, and in general it quantifies how much one must back off from this bound to accommodate locality.

A linear LRC that meets the bound \eqref{eq:x1} with equality is said to be an \emph{optimal} LRC. This paper focuses on the trade-off between alphabet size and code length for linear codes that are optimal LRCs.
One is thus tempted to make an analogy between optimal LRCs and MDS codes. The famous MDS conjecture says that there are no non-trivial (meaning, distance $d > 2$) MDS codes with length exceeding $q+1$ where $q$ is its alphabet size, except in two corner cases  ($q$ even and $k=3$, or $k=q-1$) where the length is at most $q+2$. This conjecture was famously resolved in the case when $q$ is prime by Ball \cite{Ball}.

In view of the result given in \cite{GXY18}, we define the following.
\begin{defn}{\rm Given a prime power $q$, locality $r\ge 2$ and $d\ge 5$, define
\[
N_q(d,r)=\max\{n\ge 2:\; \mbox{there exists a q-ary linear optimal LRC of length $n$, distance $d$ and locality $r$} \}.
\]
}\end{defn}
The main purpose of this paper is to give an explicit construction of LRCs that provides  lower bounds on $N_q(5,r)$ and $N_q(6,r)$.
\subsection{Known results}
The early constructions of optimal LRCs produced codes with alphabet size that is exponential in code length (see \cite{HCL,SRKV13}). In \cite{PKLK12}, another construction of optimal LRCs was proposed with alphabet  size comparable to code length. But the construction in \cite{PKLK12} only produced  a specific value of the length $n=\left\lceil \frac kr\right\rceil(r+1)$ which indicates  the rate of the code is very close to $1$. Besides, there are some existence results showed in \cite{PKLK12} and \cite{TB14} where there are less restriction on locality $r$. However, large alphabet size which is an exponential function of the code length  is required for those results.
A breakthrough construction  given  in \cite{TB14} produced  optimal LRCs with length linear  in  alphabet size although the length of codes is upper bounded by alphabet size. The idea is to use subcodes of Reed-Solomon codes. Few years later, the construction in \cite{TB14} was extended in \cite{JMX17} by using automorphism group of rational function fields. It turns out that more flexibility on locality can be achieved  and the code length can go up to $q+1$, where $q$ is the alphabet size.

Similar to the classical MDS conjecture,  one natural question is  whether the length of a $q$-ary optimal  LRC can go beyond $q+1$. Recently, this question was positively answered in \cite{LMX17}. By using elliptic curve, it was shown that there exist $q$-ary optimal  LRCs with length $n$ bigger than $q+1$ and distance linear in length $n$. More surprisingly, it was shown in \cite{LXY} that there exist $q$-ary optimal LRCs  of distance $3$ or $4$ and arbitrarily large length, i.e., there is a family of optimal LRCs  of distance $3$ or $4$ with length tending infinity.
Very recently, it was shown in \cite{GXY18} that the length of an optimal linear LRC of distance $d\ge 5$ is upper bounded by $O(dq^3)$. In particular, the length of an optimal linear LRC of distance  $5$ (and $6$, respectively) is upper bounded by $O(q^2)$ (and $O(q^{3})$, respectively). Thus, it is a challenging problem to construct  $q$-ary LRCs with distance $d\ge 5$ and length approaching the upper bound.

Furthermore, the paper \cite{GXY18}  gave an algorithmic construction of $q$-ary optimal LRCs with locality $r$ and length $n=\eta q^2$ subject to the constraints $r\ge d-1$, $(r+1)|(n+1)$ and $\eta\le \left(\frac1{2(d-1)^{d-1}(r+1)^{(d-1)/2}}\right)^{1/\lfloor\frac{d-3}2\rfloor}$. Thus, the paper \cite{GXY18} produced
(i) a $q$-ary LRCs with locality $r\ge 4$, distance $d=5$ and length $n=(r+1)\lfloor \frac{q^2}{2^{10}\times (r+1)^3}\rfloor\approx \frac{q^2}{2^{10}\times (r+1)^2}$; and (ii) a $q$-ary LRCs with locality $r\ge 5$, distance $d=6$ and length $n=(r+1)\lfloor \frac{q^2}{2\times 5^5\times (r+1)^{3.5}}\rfloor\approx \frac{q^2}{2\times 5^5\times (r+1)^{2.5}}$. Precisely speaking, \cite{GXY18} gave the following lower and upper bounds.

\begin{lemma}\label{lem:1.1} If $(r+1)|n$ and $r\ge 5$, then
\[(r+1)\left\lfloor \frac{q^2}{2^{10}\times (r+1)^3}\right\rfloor\approx \frac{q^2}{2^{10}\times (r+1)^2}\le N_q(5,r)\le\frac{r+1}r\cdot\frac{q}{q-1}\cdot q^2\]
and
\[(r+1)\left\lfloor \frac{q^2}{2\times 5^5\times (r+1)^{3.5}}\right\rfloor\approx \frac{q^2}{2\times 5^5\times (r+1)^{2.5}}\le N_q(6,r)\le\frac{r+1}r\cdot\frac{q}{q-1}\cdot q^3.\]
\end{lemma}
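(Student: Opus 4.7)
The plan is to prove the two bounds separately, following the approach of \cite{GXY18}. For the upper bound I would first put the parity-check matrix $H$ of an optimal LRC with $(r+1)\mid n$ into a canonical form: $n/(r+1)$ \emph{local} rows, each supported on one group of $r+1$ coordinates, plus $d-2$ \emph{global} rows with columns $h_{g,j}\in\F_q^{d-2}$ indexed by (group $g$, position $j$ within the group). The distance-$d$ requirement becomes a list of forbidden $\F_q$-linear combinations among the $h_{g,j}$. For $d=5$, a case analysis of $4$-column dependencies of $H$, classified by which local groups the four columns touch, distills to two essential constraints: (a) inside each group the $r+1$ global vectors are in affine general position in $\F_q^3$; and (b) for any two distinct groups $g\ne g'$ the projective directions $[h_{g,j}-h_{g,j'}]$ and $[h_{g',k}-h_{g',k'}]$ lie in disjoint subsets of $\PP^2(\F_q)$. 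A counting inequality inside $\PP^2(\F_q)$, together with an accounting for scalar multiples and for the eliminated local coordinate per group, yields the stated $n\le \frac{r+1}{r}\cdot\frac{q}{q-1}\cdot q^2$. The $d=6$ case is analogous: $5$-column dependencies demand the further ``no three collinear in a group'' condition and a $3$-$2$ cross-group constraint in $\F_q^4$, and the counting now happens inside $\PP^3(\F_q)$, producing the $q^3$ factor.

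For the lower bound I would reverse-engineer the algorithmic construction: greedily enlarge the collection of local groups by selecting at each stage a fresh $(r+1)$-set $\{h_{g,j}\}_{j=1}^{r+1}\subseteq \F_q^{d-2}$ that (i) is in affine general position internally, and (ii) produces projective differences disjoint from those created by all previously placed groups. After $t$ groups have been chosen, at most $t\cdot\binom{r+1}{2}$ directions are forbidden in $\PP^{d-3}(\F_q)$, while the number of still-admissible candidate $(r+1)$-sets remains a positive fraction of $|\F_q^{d-2}|^{r+1}$; a union bound then shows that the procedure can produce at least $\eta q^2$ groups with $\eta$ of the order stated in the lemma.

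The delicate step is the quantitative feasibility of this greedy argument. Each newly added group contributes $\Theta((r+1)^{d-3})$ cross-group configurations that must be excluded in later rounds, and one has to verify that once $t$ reaches $\eta q^2$ the total set of forbidden patterns is still a strict subset of the candidate space. The explicit denominators $2^{10}(r+1)^3$ for $d=5$ and $2\cdot 5^5(r+1)^{3.5}$ for $d=6$ emerge from a careful enumeration of these bad configurations combined with pigeonhole in $\PP^{d-3}(\F_q)$; this $(r+1)$-loss is precisely what makes the \cite{GXY18} argument non-constructive and motivates the present paper's explicit alternative via binary constant-weight codes.
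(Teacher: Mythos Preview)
The paper does not prove Lemma~\ref{lem:1.1}. It is stated purely as a citation of \cite{GXY18}: the sentence immediately preceding the lemma reads ``Precisely speaking, \cite{GXY18} gave the following lower and upper bounds,'' and no argument follows. So there is no ``paper's own proof'' for you to be compared against; Lemma~\ref{lem:1.1} functions here only as background motivating the new explicit constructions in Section~3.

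Your sketch is a plausible high-level outline of the \cite{GXY18} strategy (canonical parity-check form, case split on how the $d-1$ dependent columns distribute among local groups, counting in $\PP^{d-3}(\F_q)$ for the upper bound; greedy selection with a union bound for the lower bound). But as written it is only a plan, not a proof: the crucial quantitative steps are asserted rather than carried out. In particular, for the upper bound you would still need to pin down exactly how many distinct projective directions each group is forced to contribute and turn that into the precise constant $\frac{r+1}{r}\cdot\frac{q}{q-1}$; and for the lower bound the claim that ``the number of still-admissible candidate $(r+1)$-sets remains a positive fraction of $|\F_q^{d-2}|^{r+1}$'' is exactly the place where the explicit constants $2^{10}(r+1)^3$ and $2\cdot5^5(r+1)^{3.5}$ must be earned, and you have not done that computation. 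If your intent was merely to indicate that this lemma is imported from \cite{GXY18} and to gesture at why those bounds are believable, that is fine and matches what the present paper does; if you meant it as a self-contained proof, substantial work remains.
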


\subsection{Our result and comparison}
By making use of parity-check matrices, we present an explicit construction of optimal LRCs of distance $5$ and $6$. The key ingredients of our construction is a family $\{I_i\}_{i=1}^m$ of subsets of $\{1,2,\dots,n\}$ such that $|I_i|=r+1$ and $|I_i\cap I_j|\le 1$ for all $1\le i<j\le m$. This turns out that we require   a binary constant weight code with specific parameters. Precisely speaking, our main result is
\begin{theorem}\label{thm:1.2} One has the following results.
\begin{itemize}
\item[{\rm (i)}] {\rm (see Corollary \ref{cor:3.3})} If $r+1\ge 5$ is a prime power, then for any $t\ge 1$ there exists an explicit construction of a family of optimal $q$-ary $[n,k,5]$-LRCs  with locality $r$, where $q=(r+1)^t$, $n=(r+1)m$ and $k=n-m-3$ and $m=\frac{(r+1)^{t-1}((r+1)^t-1)}{r}$. Hence, $N_q(5,r)\ge \frac 1r q(q-1)$.
\item[{\rm (ii)}] {\rm (see Corollary \ref{cor:3.4})} If $r+1\ge 8$ is a power of $2$, then for any $t\ge 1$ there exists an explicit construction of a family of optimal $q$-ary $[n,k,6]$-LRCs  with locality $r$, where $q=(r+1)^t$, $n=(r+1)m$ and $k=n-m-4$ and $m=\frac{(r+1)^{t-1}((r+1)^t-1)}{r}$. Hence, $N_q(6,r)\ge\frac 1r q(q-1)$.
\item[{\rm (iii)}] {\rm (see Corollary \ref{cor:3.5})}
For $4\le r\le q-1$ with $(r+1)|n$,  there exists an explicit construction of an optimal $q$-ary $[n,k,5]$-LRC  with locality $r$, where  $k=n-\frac{n}{r+1}-3$ and  $n\ge \frac{{q\choose r+1}}{q^{r-1}-1}$. Hence, $N_q(5,r)\ge\Omega_r(q^2)$.
\item[{\rm (iv)}] {\rm (see Corollary \ref{cor:3.6})} Let $q$ be a power of $2$. For $4\le r\le q-1$ with $(r+1)|n$,  there exists an explicit construction of an optimal $q$-ary $[n,k,6]$-LRC  with locality $r$, where  $k=n-\frac{n}{r+1}-4$ and $n\ge \frac{{q\choose r+1}}{q^{r-1}-1}$. Hence, $N_q(6,r)\ge\Omega_r(q^2)$.
\end{itemize}
\end{theorem}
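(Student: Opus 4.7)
My plan is to construct the parity-check matrix $H$ of the LRC in two blocks. The first block contains $m$ local rows, one per set $I_i$: the $i$-th row is supported on the coordinates indexed by $I_i$ and has $r+1$ nonzero entries (in $\F_q$) chosen so that any single coordinate in $I_i$ is recoverable from the others. The second block contains $d-2$ global rows (three for $d=5$, four for $d=6$) whose columns form a Vandermonde pattern $(1,\alpha,\alpha^2,\ldots)^{\top}$ indexed by a labeling of the coordinates by elements of $\F_q$. Since the local rows are supported on nearly disjoint sets (pairwise intersection $\le 1$), a short argument shows that all $m+(d-2)$ rows are linearly independent, giving dimension exactly $k=n-m-(d-2)$, which matches the Singleton-type bound \eqref{eq:x1} with equality. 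Locality $r$ is immediate from the local rows.

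The crux is the minimum distance analysis. Suppose for contradiction that $c$ is a nonzero codeword with $\wt(c)\le d-1$. The local check on $I_i$ forces $|\Supp(c)\cap I_i|$ to be $0$ or at least $2$, which severely constrains the possible support patterns: for $d=5$, either all support lies in a single $I_i$, or it splits $2+2$ across two sets $I_i,I_j$; for $d=6$ we additionally see $3+3$ and $2+2+2$ patterns. The ``single set'' case is immediate from Vandermonde non-vanishing. The $2+2$ case is the hardest: the two local equations reduce $c$ to two free parameters and the three global rows give a $3\times 2$ linear system; one must show its $2\times 2$ minors are nonzero, and here the partial Steiner condition $|I_i\cap I_j|\le 1$ prevents the labels in $I_i$ and $I_j$ from coinciding in a way that collapses a minor. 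The $d=6$ cases $3+3$ and $2+2+2$ proceed analogously with the fourth global row; for the $3+3$ pattern one uses Newton's identities, which is why parts (ii) and (iv) restrict to $\chr\F_q=2$.

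To supply the sets $\{I_i\}$, in parts (i) and (ii) where $r+1$ is a prime power I would take the lines of the affine geometry $AG(t,r+1)$, obtaining a Steiner system $S(2,r+1,q)$ with $m=(r+1)^{t-1}(q-1)/r$ blocks on $q=(r+1)^t$ points; this yields $n=(r+1)m=q(q-1)/r$. For parts (iii) and (iv) I would construct the partial Steiner system by a greedy/counting argument on $(r+1)$-subsets of $\F_q$: each chosen block forbids at most $q^{r-1}-1$ further blocks (those sharing two or more elements with it), so greedy selection yields a family of at least $\binom{q}{r+1}/(q^{r-1}-1)$ compatible blocks, giving $n\ge \binom{q}{r+1}/(q^{r-1}-1)$.

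I expect the main obstacle to be the distance arguments, in particular the $2+2$ case for $d=5$ and the $3+3$ case for $d=6$. These require a careful match between the local coefficients, the labeling of coordinates, and the choice of global Vandermonde rows so that the relevant determinants reduce to products of nonzero differences of distinct labels, which is precisely where the partial Steiner property $|I_i\cap I_j|\le 1$ enters.
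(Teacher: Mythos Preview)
Your overall strategy---local all-one rows plus Vandermonde-type global rows, with the block system coming from a constant-weight code---is exactly the paper's approach. However, two concrete points in your plan would not go through as stated.

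\textbf{The global rows cannot start at the constant $1$.} If the $d-2$ global columns are $(1,\alpha,\alpha^2)^{\top}$ (for $d=5$), then the all-ones global row is the sum of the $m$ local rows, so it contributes nothing. Worse, the $2+2$ case then genuinely fails: with labels $a_1,a_2\in I_i$ and $b_1,b_2\in I_j$ and coefficients $\lambda,-\lambda,\mu,-\mu$ forced by the local checks, the two surviving global equations give only
\[
(a_1-a_2)(b_1-b_2)\bigl[(b_1+b_2)-(a_1+a_2)\bigr]=0,
\]
and $a_1+a_2=b_1+b_2$ can certainly hold with $|I_i\cap I_j|\le 1$. The paper instead uses powers $(\alpha,\alpha^2,\alpha^3)$ (and $(\alpha,\alpha^2,\alpha^3,\alpha^4)$ for $d=6$); then the three global equations force $a_1+a_2=b_1+b_2$ \emph{and} $a_1a_2=b_1b_2$, whence $\{a_1,a_2\}=\{b_1,b_2\}$, contradicting $|I_i\cap I_j|\le 1$ (Lemma~\ref{lem:3.0}). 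So your determinant does not ``reduce to products of nonzero differences of distinct labels'' with the stated choice of powers.

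\textbf{The $d=6$ case list is wrong.} You are ruling out codewords of weight $\le 5$, and since each block meets the support in $0$ or $\ge 2$ positions, the only multi-block pattern of weight $5$ is $3+2$; the patterns $3+3$ and $2+2+2$ have weight $6$ and are irrelevant. The $3+2$ case is precisely the one that needs the characteristic-$2$ hypothesis: the paper handles it (Lemma~\ref{lem:3.1}) by observing that rows $1,2,4$ of the resulting $4\times 3$ matrix form a Moore matrix over $\F_2$, forcing an $\F_2$-linear dependence among $a_1+a_2,\ a_1+a_3,\ b_1+b_2$ and then deducing $\{a_1,a_2\}=\{b_1,b_2\}$. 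Newton's identities alone do not pin this down.

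A smaller remark on parts (iii) and (iv): the paper obtains $m\ge \binom{q}{r+1}/(q^{r-1}-1)$ by quoting a constant-weight code construction from algebraic curves (Lemma~\ref{lem:2.3}), not a greedy argument. Your greedy count ``each chosen block forbids at most $q^{r-1}-1$ others'' is not correct as stated (for fixed $r$ the number of $(r+1)$-sets sharing $\ge 2$ points with a given one is $\binom{r+1}{2}\binom{q-2}{r-1}$ up to lower order, which exceeds $q^{r-1}$); a greedy/Tur\'an bound will still give $\Omega_r(q^2)$ blocks, but not the exact constant in the statement.
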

Lower bounds on $N_q(d,r)$ given in Theorem \ref{thm:1.2}(i) and (ii) are better than those in Lemma \ref{lem:1.1}, but Lemma \ref{lem:1.1} has less constraint on locality $r$. Furthermore, the bounds in Theorem \ref{thm:1.2} are constructive, while the bounds in Lemma \ref{lem:1.1} are algorithmically constructive.
\subsection{Organization}
In Section 2, we introduce some basic definitions and results on LRCs, binary constant weight codes and Moore matrices. In Section 3, we present our explicit construction of LRCs via binary constant weight codes. Furthermore, we apply this construction to various binary constant weight codes to obtain our main result.
\section{Preliminaries}
In this section, we introduce some basic definitions and results on LRCs, binary constant weight codes and Moore matrices.
\subsection{Locally repairable codes}\label{subsec:2.1}
Let $\F_q$ be the finite field of $q$ elements.
We denote by $[n]$ the set $\{1,2,\dots,n\}$. For a vector $\bu=(u_1,u_2,\dots,u_n)\in\F_q^n$ and a subset $I\subseteq[n]$, we denote by $\bu_I$ the projection of $\bu$ on $I$, i.e., $\bu_I=(u_i)_{i\in I}$. For a subset $C\subseteq\F_q^n$, we denote by $R_I$ the set $\{\bc_I:\; \bc\in C\}$.
\begin{defn}\label{def:1}{\rm
A $q$-ary block code $C\subseteq \F_q^n$ of length $n$ is called a locally recoverable code or locally repairable code (LRC for short) with locality $r$ if for any $i\in[n]$, there exists a subset $I\subset[n]\setminus\{i\}$ of size $r$ such that  for any $\bc=(c_1,\dots,c_n)\in C$, $c_i$ can be recovered by $\{c_j\}_{j\in I}$, i.e., for  any $i\in[n]$, there exists a subset $I\subset[n]\setminus\{i\}$ of size $r$ such that  for any  $\bu,\bv\in C$, $\bu_{I\cup\{i\}}=\bv_{I\cup\{i\}}$ if and only $\bu_I=\bv_I$. The set $I$ is called a recover set of $i$.
}\end{defn}
\begin{rmk}\label{rmk:1}{\rm
In literature, there are various definitions of LRCs and they are equivalent.  For instance, we have the following equivalence definitions of a locally recoverable code.
\begin{itemize}
\item[(i)] For any $i\in[n]$, there exists a subset $I\subset[n]\setminus\{i\}$ of size $r$ such that position $i$ of every codeword $\bc\in C$ is determined by $\bc_I$.
\item[(ii)] For any $i\in[n]$, there exists a subset $I\subset[n]\setminus\{i\}$ of size $r$ such that
\[C(i,\Ga)_I\cap C(i,\Gb)_I=\emptyset\]
for any $\Ga\neq\Gb\in\F_q$, where $C(i,\Ga)=\{\bc\in C:\; c_i=\Ga\}$.
\end{itemize}
}\end{rmk}

\begin{rmk}\label{rmk:2}{\rm If $C$ is a $q$-ary linear code with a parity-check matrix of the following form
\begin{equation}\label{eq:} H=\left(\begin{array}{c|c|c|c}
\bi&\bo&\cdots&\bo\\ \hline
\bo&\bi&\cdots&\bo \\ \hline
\vdots&\vdots&\ddots&\vdots \\ \hline
\bo&\bo&\cdots&\bi \\ \hline
D_1&D_2&\cdots&D_m
\end{array}
\right)
\end{equation}
for some matrices $D_i\in\F_q^{(n-k-n/(r+1))\times(r+1)}$,  where $\bi$ and $\bo$ stand for the all-one vector and the zero vector of length $r+1$, respectively, then $C$ is an $[n,\ge k]$-LRC with locality $r$. To show the locality $r$, let $\bc=(c_1,c_2,\dots,c_n)$ be a codeword and suppose we want to repair $c_i$. Write $i=(r+1)a+b$ for some $0\le a\le \frac{n}{r+1}-1$ and $1\le b\le r+1$. As $H$ is a parity-check matrix, we must have $\sum_{j=1}^{r+1}c_{(r+1)a+j}=0$, i.e., $c_i=-\sum_{1\le j\le r+1,j\neq b}c_{(r+1)a+j}$. This implies that $c_i$ is determined by $\bc_I$, where $I=\{(r+1)a+1, (r+1)a+2,\dots, (r+1)a+r+1\}\setminus\{i\}$.
}\end{rmk}

\subsection{Constant-weight codes}
A binary constant-weight code of length $n$ is a subset of $\F_2^n$ with each codeword having a fixed Hamming weight.
A binary constant-weight code of length $n$, size $M$, minimum distance $d$, and weight $w$ is denoted as $(n,M,d;w)$. It is a well-known fact that the followings are equivalent
\begin{itemize}
\item[(i)] There is a binary constant-weight code of length $n$, size $M$, weight $w$ and minimum distance at least $2w-2t$;
\item[(ii)] There is a set $\{I_i\}_{i=1}^M$ of subsets of $[n]$ such that $|I_i|=w$ and $|I_i\cap I_j|\le t$ for all $1\le i\neq j\le M$.
\end{itemize}

For  given $n,d,w$, it is a central coding problem to determine the maximum $M$ such that there is a binary $(n,M,d; w)$  constant weight code. In view of this,  we define
\[A(n,d,w):=\max \{M: \mbox{there exists a binary $(n,M,d; w)$  constant weight code}\}.\]

It is a challenging task to determine the exact value of $A(n,d,w)$ in general. Until now, the exact values of $A(n,d,w)$ have been determined for either some special values $n, d,w$ or some small $n, d,w$. Instead, researchers have made great effort on establishing some reasonable upper and lower bounds on $A(n,d,w)$ \cite{BSS}.

For our application, we are interested in constant weight codes of weight $r+1$ and minimum distance $2r$, namely the value $A(n,2r,r+1)$ only. Binary constant weight codes  are closely related to  Steiner systems (the reader may refer to \cite[Chapter 8]{AK92} for details on Steiner systems). Precisely speaking, we have the following result.
\begin{lemma}\label{lem:2.1} (see \cite{S66} and \cite[pp.528]{MS79}) There is a Steiner system $S(w-\Gd+1,w,n)$ if and only if one has
\begin{equation}
A(n,2\Gd,w)=\frac{{n\choose w-\Gd+1}}{{w\choose w-\Gd+1}}=\frac{n(n-1)\cdots(n-w+\Gd)}{w(w-1)\cdots\Gd}.
\end{equation}
In particular, a Steiner system $S(2,w,n)$ exists if and only if one has
\begin{equation}
A(n,2w-2,w)=\frac{n(n-1)}{w(w-1)}.
\end{equation}
\end{lemma}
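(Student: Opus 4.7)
The plan is to pass from binary constant weight codes to set systems via supports, and then apply a classical double-counting argument that simultaneously yields the upper bound on $A(n,2\Gd,w)$ and characterizes its equality case as a Steiner system.

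First, I would identify each weight-$w$ codeword $\bc\in\F_2^n$ with its support $\Supp(\bc)\subseteq[n]$, a $w$-subset. Under this identification, the Hamming distance between two codewords with supports $I$ and $J$ equals $|I\bigtriangleup J|=2w-2|I\cap J|$, so a binary $(n,M,2\Gd;w)$ constant weight code corresponds exactly to a family $\mF$ of $M$ distinct $w$-subsets of $[n]$ satisfying $|I\cap J|\le w-\Gd$ for all distinct $I,J\in\mF$. Setting $s:=w-\Gd+1$, this intersection condition is equivalent to the assertion that no $s$-subset of $[n]$ is contained in two distinct members of $\mF$.

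Second, I would double-count the incidence set $\{(S,B):|S|=s,\ B\in\mF,\ S\subseteq B\}$. Each block $B\in\mF$ contributes exactly $\binom{w}{s}$ such pairs, for a total of $|\mF|\binom{w}{s}$. Since each $s$-subset of $[n]$ lies in at most one block, the total is at most $\binom{n}{s}$, giving
\[
A(n,2\Gd,w)\ \le\ \frac{\binom{n}{s}}{\binom{w}{s}}\ =\ \frac{n(n-1)\cdots(n-w+\Gd)}{w(w-1)\cdots\Gd},
\]
where the last equality is the routine expansion of both binomials as falling factorials with $s=w-\Gd+1$ terms. Equality holds in this upper bound if and only if \emph{every} $s$-subset of $[n]$ is covered by exactly one block of $\mF$, which is precisely the defining property of a Steiner system $S(w-\Gd+1,w,n)$. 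This proves the equivalence, and the ``in particular'' clause is just the specialization $\Gd=w-1$, giving $s=2$ and distance $2w-2$.

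No step here should be a real obstacle, since the whole argument is classical: once one makes the translation from weight-$w$ codewords to their supports, the lemma collapses to a one-line double-counting inequality whose equality case is, by definition, a $2$-design on $s$-subsets. The only point requiring minor care is the algebraic simplification of the binomial ratio into the falling-factorial ratio displayed in the statement.
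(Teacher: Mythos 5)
Your argument is correct, and it is the standard proof of this classical fact; note that the paper itself offers no proof at all, simply citing Sch\"onheim and MacWilliams--Sloane, so you are supplying a self-contained elementary justification of a quoted result rather than paralleling an argument in the text. Your translation codeword $\mapsto$ support, the identity $d(I,J)=2w-2|I\cap J|$, and the double count of pairs $(S,B)$ with $|S|=w-\Gd+1$ and $S\subseteq B$ give exactly the Johnson/Sch\"onheim-type bound and its equality case. The only point worth phrasing more carefully is the equality discussion: equality of the \emph{maximum} $A(n,2\Gd,w)$ with the bound means that some extremal family $\mF$ attains it, and it is for that particular family that the counting forces every $(w-\Gd+1)$-subset to lie in exactly one block, hence a Steiner system; conversely, an existing Steiner system is itself a code of the stated size, which together with the upper bound pins down $A(n,2\Gd,w)$. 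As written you slightly conflate ``equality for the extremal quantity'' with ``equality for a given family,'' but this is a one-sentence fix, not a gap.
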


Various values of $A(n,d,w)$ have been obtained via existence of Steiner systems. By using projective geometry, we obtain the following explicit construction of binary constant weight codes.
\begin{lemma}\label{lem:2.2} One has
\[A(\ell^t, 2\ell-2,\ell)=\frac{\ell^{t-1}(\ell^t-1)}{\ell-1}\]
for any prime power $\ell$ and integer $t\ge 1$. Furthermore, the above binary constant-weight codes can be explicitly constructed through Steiner system based on the projective geometry.
\end{lemma}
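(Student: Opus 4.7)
\medskip

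\noindent\textbf{Proof proposal.} The plan is to reduce the claim to the existence of an explicit Steiner system $S(2,\ell,\ell^t)$ and then exhibit such a system using the $t$-dimensional affine geometry over $\F_\ell$ (obtained from the projective space $\mathrm{PG}(t,\ell)$ by deleting a hyperplane at infinity).

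First, I would invoke Lemma \ref{lem:2.1}. With $n=\ell^t$ and $w=\ell$, the quantity $\frac{n(n-1)}{w(w-1)}$ specializes to $\frac{\ell^t(\ell^t-1)}{\ell(\ell-1)}=\frac{\ell^{t-1}(\ell^t-1)}{\ell-1}$, which is precisely the number claimed for $A(\ell^t,2\ell-2,\ell)$. Thus it suffices to produce an $S(2,\ell,\ell^t)$: a collection of $\ell$-element blocks on $\ell^t$ points such that every $2$-element subset lies in exactly one block. The equivalence in Lemma \ref{lem:2.1} then upgrades this to an equality (not just a lower bound) for $A(\ell^t,2\ell-2,\ell)$, since the Johnson/Fisher-type upper bound forcing the equality is precisely what is packaged inside Lemma \ref{lem:2.1}.

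Next I would give the explicit construction. Take the point set to be $\F_\ell^t$, and take as blocks the affine lines
\[
L_{\ba,\bv}=\{\ba+\Gl\bv:\Gl\in\F_\ell\},\qquad \ba\in\F_\ell^t,\ \bv\in\F_\ell^t\setminus\{\mathbf 0\},
\]
identifying two such parametrizations whenever they determine the same set. Each block has exactly $|\F_\ell|=\ell$ points because $\bv\ne\mathbf 0$. For any two distinct points $\bx,\by\in\F_\ell^t$, the unique affine line through them is $L_{\bx,\by-\bx}$, which proves the $S(2,\ell,\ell^t)$ property. Equivalently, this is exactly $\mathrm{AG}(t,\ell)$, realized inside $\mathrm{PG}(t,\ell)$ by fixing a hyperplane $H\cong\mathrm{PG}(t-1,\ell)$ and restricting the projective lines of $\mathrm{PG}(t,\ell)$ not contained in $H$ to their $\ell$ affine points. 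This makes good on the ``projective geometry'' phrasing of the statement.

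Finally, counting the blocks by double-counting incident pairs $(\{x,y\},B)$ gives $\binom{\ell^t}{2}/\binom{\ell}{2}=\frac{\ell^{t-1}(\ell^t-1)}{\ell-1}$ blocks, matching $M$. Combined with Lemma \ref{lem:2.1}, this yields both the exact value of $A(\ell^t,2\ell-2,\ell)$ and the explicit description of an optimal binary constant-weight code (its codewords being the indicator vectors of the affine lines). There is no real obstacle here beyond identifying the right geometry; the only mild point to check carefully is that the definition of a Steiner system used in Lemma \ref{lem:2.1} matches the pairwise-balanced condition that affine lines satisfy, which I would verify by a one-line appeal to the axioms in \cite[Chapter 8]{AK92}.
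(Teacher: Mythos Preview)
The paper does not actually supply a proof of Lemma~\ref{lem:2.2}; it states the result as a known fact, pointing to Lemma~\ref{lem:2.1} and to Steiner systems ``based on the projective geometry.'' Your proposal correctly reconstructs the intended argument: apply Lemma~\ref{lem:2.1} with $n=\ell^t$ and $w=\ell$, and exhibit an $S(2,\ell,\ell^t)$ via the affine lines of $\mathrm{AG}(t,\ell)$ (equivalently, $\mathrm{PG}(t,\ell)$ with a hyperplane removed). The verification that affine lines form an $S(2,\ell,\ell^t)$ and the block count are both right, and your remark linking $\mathrm{AG}(t,\ell)$ to $\mathrm{PG}(t,\ell)$ appropriately accounts for the paper's ``projective geometry'' phrasing. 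This is exactly the standard derivation the paper is alluding to, so your proof is correct and in line with the paper's (implicit) approach.
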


\begin{lemma}\label{lem:2.3}\cite{XING}
If $\delta\ge 3$ and $q$ is a prime power, then
\[ A(q,2\delta, w)\ge \frac{{q\choose w}}{q^{\delta-1}-1}.\]
In particular, for a prime power $q$ and an integer $r\ge 3$, one has
\begin{equation}
A(q,2r, r+1)\ge \frac{{q\choose r+1}}{q^{r-1}-1}=\Omega_r(q^2).
\end{equation}
\end{lemma}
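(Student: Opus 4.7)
The plan is to exhibit a large constant-weight code by routing all $w$-subsets of $\F_q$ through an algebraic ``syndrome'' map whose image lies in a \emph{multiplicative} group of size $q^{\delta-1}-1$. I would fix an irreducible polynomial $p(x)\in\F_q[x]$ of degree $\delta-1$ and identify $\F_q[x]/(p(x))$ with $\F_{q^{\delta-1}}$. Since $p$ has degree at least $2$ (this is where the hypothesis $\delta\ge 3$ enters), $p$ has no root in $\F_q$, so every linear polynomial $x-s$ with $s\in\F_q$ represents a unit modulo $p$. This yields a well-defined map
\[\sigma\colon \binom{\F_q}{w}\longrightarrow \F_{q^{\delta-1}}^{*},\qquad S\longmapsto \prod_{s\in S}(x-s)\pmod{p(x)}.\]
Pigeonholing $\binom{q}{w}$ subsets into a target of size $q^{\delta-1}-1$ then produces some fiber $\sigma^{-1}(\alpha)$ of cardinality at least $\binom{q}{w}/(q^{\delta-1}-1)$.

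The second step is to verify that each fiber $\sigma^{-1}(\alpha)$ is a binary constant-weight code of weight $w$ and minimum distance at least $2\delta$. Using the equivalence recalled just before the definition of $A(n,d,w)$, this amounts to showing any two distinct $S,T\in\sigma^{-1}(\alpha)$ satisfy $|S\cap T|\le w-\delta$. Set $A=S\setminus T$ and $B=T\setminus S$, both of common size $\ell=w-|S\cap T|$, and suppose for contradiction $\ell\le \delta-1$. From $\sigma(S)=\sigma(T)$ I get $\prod_{a\in A}(x-a)\equiv \prod_{b\in B}(x-b)\pmod{p(x)}$; both sides are monic of degree $\ell$, so their difference has degree at most $\ell-1\le \delta-2<\deg p$, and a polynomial in $\F_q[x]$ of degree smaller than $\deg p$ that is divisible by $p$ must vanish identically. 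Hence $\prod_{a\in A}(x-a)=\prod_{b\in B}(x-b)$, forcing $A=B$ by unique factorization, and combined with $A\cap B=\emptyset$ this gives $A=B=\emptyset$, i.e., $S=T$, a contradiction.

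Combining the two steps yields $A(q,2\delta,w)\ge \binom{q}{w}/(q^{\delta-1}-1)$. For the ``In particular'' clause I would just specialize $w=r+1$, $\delta=r\ge 3$: then $\binom{q}{r+1}=q(q-1)\cdots(q-r)/(r+1)!=\Theta_r(q^{r+1})$ while $q^{r-1}-1\le q^{r-1}$, so the ratio is $\Omega_r(q^2)$. The main obstacle, as I see it, is conceptual rather than technical: one must choose to map into the multiplicative group $\F_{q^{\delta-1}}^{*}$ (size $q^{\delta-1}-1$) rather than into the additive group $\F_q^{\delta-1}$ (size $q^{\delta-1}$), because this is what buys the ``$-1$'' in the denominator, and this refinement is afforded precisely by the irreducibility of $p$ in degree $\ge 2$, which is exactly why the hypothesis $\delta\ge 3$ is needed.
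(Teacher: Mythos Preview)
Your argument is correct. Note, however, that the paper does not supply its own proof of this lemma: it is quoted from the reference \cite{XING} (Xing--Ling), so there is no in-paper proof to compare against. What you have written is precisely the genus-zero (rational function field) instance of the Xing--Ling construction: your irreducible $p(x)$ of degree $\delta-1$ is a place of degree $\delta-1$ on $\mathbb{P}^1/\F_q$, your syndrome $\sigma(S)=\prod_{s\in S}(x-s)\bmod p$ is the image in the residue field $\F_{q^{\delta-1}}$ of a function with prescribed zero divisor, and the distance argument is exactly the observation that a nonzero element of the Riemann--Roch space $\mathcal{L}((\delta-1)P_\infty)$ cannot have $\delta$ or more zeros. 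The general algebraic-curve version in \cite{XING} can beat the bound stated here for higher-genus curves, but for the rational line your elementary polynomial argument recovers the result with no loss, and your explanation of why $\delta\ge 3$ is needed (so that $p$ is coprime to every $x-s$ and the target is the multiplicative group of size $q^{\delta-1}-1$) is exactly right.
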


\subsection{Moore determinant}
  \label{subsec:2.3}
  Let $\ell$ be a power of $q$. For elements $\Ga_1,\dots,\Ga_h\in\F_\ell$, the Moore matrix is defined by
  \[M=\left(\begin{array}{cccc}
\Ga_1&\Ga_2&\cdots&\Ga_h\\
\Ga_1^q&\Ga_2^q&\cdots&\Ga_h^q\\
\vdots&\vdots&\ddots&\vdots \\
\Ga_1^{q^{h-1}} &\Ga_2^{q^{h-1}} &\cdots&\Ga_h^{q^{h-1}}
\end{array}
\right)\in\F_{\ell}^{h\times h}.\]
  The determinant $\det(M)$ is given by the following formula
  \[\det(M)=\prod_{(c_1,\dots,c_h)}(c_1\Ga_1+\cdots+c_h\Ga_h),\]
  where $(c_1,\dots,c_h)$ runs through all non-zero direction vectors in $\F_q^h$. Thus, $\det(M)\neq 0$ if and  only if  $\Ga_1,\dots,\Ga_h$ are $\F_q$-linearly independent.

\section{Explicit construction}
In this section, we make use of the parity-check matrix of the form given in Remark \ref{rmk:2} to construct  LRCs of distances $5$ and $6$. Firstly, we present an important Lemma which is essential for the construction of  LRCs.

Let $I$ and $J$ be two subsets of $\F_q$ with size $r+1$ such that $|I\cap J|\le 1$. Denote $I=\{a_1,\cdots,a_{r+1}\}$ and $J=\{b_1,\cdots,b_{r+1}\}$.
Define the following two matrices over $\F_q$
{ \begin{equation}
A=\left(\begin{array}{cccc}
1&1&\cdots&1\\
0&0&\cdots&0\\
a_1&a_2&\cdots&a_{r+1}\\
a_1^2&a_2^2&\cdots&a_{r+1}^2\\
a_1^3&a_2^3&\cdots&a_{r+1}^3\\
\end{array}
\right)
\quad \quad
B=\left(\begin{array}{cccc}
0&0&\cdots&0\\
1&1&\cdots&1\\
b_1&b_2&\cdots&b_{r+1}\\
b_1^2&b_2^2&\cdots&b_{r+1}^2\\
b_1^3&b_2^3&\cdots&b_{r+1}^3\\
\end{array}
\right).
\end{equation}}

Write $A=[\a_1,\cdots, \a_{r+1}]$ and  $B=[\b_1,\cdots, \b_{r+1}]$, where $\a_i, \b_i$ are column vectors.
\begin{lemma}\label{lem:3.0}  Let $A$ and $B$ be the two matrices defined above.
Then any four column vectors consisting of two columns from  matrix A and the other two columns from matrix B are linearly independent.
\end{lemma}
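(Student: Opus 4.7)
The plan is to suppose, towards a contradiction, that there exist scalars $c_1,c_2,c_3,c_4\in\F_q$, not all zero, with
\[
c_1\ba_i + c_2\ba_j + c_3\bb_k + c_4\bb_l = \bo,
\]
where $i\neq j$ and $k\neq l$, and show this forces $\{a_i,a_j\}=\{b_k,b_l\}$, contradicting the hypothesis $|I\cap J|\le 1$. Reading off the first two coordinates of $A$ and $B$ immediately yields $c_2=-c_1$ and $c_4=-c_3$, so the dependence relation collapses to a system in the two unknowns $c_1,c_3$ coming from the last three rows.

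Next, after substituting $c_2=-c_1$ and $c_4=-c_3$, I would factor $\alpha:=a_i-a_j\neq 0$ out of each difference $a_i^s-a_j^s$ (and similarly $\beta:=b_k-b_l\neq 0$ out of $b_k^s-b_l^s$) for $s=1,2,3$. Writing $u:=c_1\alpha$, $v:=c_3\beta$, $s_A:=a_i+a_j$, $s_B:=b_k+b_l$, $p_A:=a_ia_j$, $p_B:=b_kb_l$, the last three rows become
\[
u+v=0,\qquad u\,s_A+v\,s_B=0,\qquad u(s_A^2-p_A)+v(s_B^2-p_B)=0,
\]
where I have used the identity $x^3-y^3=(x-y)\bigl((x+y)^2-xy\bigr)$, valid in every characteristic. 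Using $v=-u$, these reduce to $u(s_A-s_B)=0$ and $u\bigl((s_A^2-s_B^2)-(p_A-p_B)\bigr)=0$.

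Now I case-split on whether $u=0$. If $u\neq 0$, the first reduced equation forces $s_A=s_B$, and then the second forces $p_A=p_B$. Thus $\{a_i,a_j\}$ and $\{b_k,b_l\}$ are both root sets of the common monic quadratic $X^2-s_AX+p_A\in\F_q[X]$, so they coincide, contradicting $|I\cap J|\le 1$. Hence $u=0$, and since $\alpha,\beta\neq 0$ this gives $c_1=c_3=0$, and finally $c_2=c_4=0$, proving linear independence.

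The only real subtlety is ensuring the cubic identity and the quadratic-uniqueness argument are characteristic-free (so that the lemma applies when $q$ is a power of $2$, as needed for distance~$6$ in Corollaries~\ref{cor:3.4} and \ref{cor:3.6}); this is why I prefer the symmetric formulation $x^3-y^3=(x-y)((x+y)^2-xy)$ over expanding via $x^2+xy+y^2$. Everything else is a routine linear-algebra reduction, so I expect the main (minor) obstacle to be bookkeeping the case analysis cleanly rather than any conceptual difficulty.
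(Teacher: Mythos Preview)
Your argument is correct and follows essentially the same route as the paper's proof: use the first two rows to force $c_2=-c_1$, $c_4=-c_3$, then use the last three rows to deduce that the pairs $\{a_i,a_j\}$ and $\{b_k,b_l\}$ have the same sum and product, hence coincide, contradicting $|I\cap J|\le 1$. The only cosmetic differences are that you package the ``some coefficient is zero'' case into the test $u=0$ (whereas the paper treats it separately before assuming all $\lambda_t\neq 0$), and you write the cubic factor as $(x+y)^2-xy$ rather than $x^2+xy+y^2$---but these are identically equal in every characteristic, so no extra care is actually needed there.
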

\begin{proof}
Without loss of generality, we prove that $\a_1, \a_2,\b_1, \b_2$ are linearly independent.
Suppose that $\Gl_i\in\F_q$ such that $\Gl_1\a_1+\Gl_2 \a_2+\Gl_3\b_1+\Gl_4\b_2=\bo$. It is clear that $\Gl_1+\Gl_2=\Gl_3+\Gl_4=0$. If one of $\Gl_i$ is zero, say $\Gl_1=0$, then $\Gl_2=0$ as well. This gives $\Gl_3\b_1+\Gl_4\b_2=\bo$. This implies that $\Gl_3=\Gl_4=0$ as $\b_1$ and $\b_2$ are linearly independent. Suppose that none of $\Gl_i$ were zero. Put $a=\Gl_1$ and $b=\Gl_3$, then $\Gl_2=-a$ and $\Gl_4=-b$. Thus, considering the last three coordinates of  the column vector $\Gl_1\a_1+\Gl_2 \a_2+\Gl_3\b_1+\Gl_4\b_2$ gives the following three identities

\begin{eqnarray*}
a(a_1-a_2)&=&b(b_1-b_2)\\
 a(a_1^2-a_2^2)&=&b(b_1^2-b_2^2)\\
  a(a_1^3-a_2^3)&=&b(b_1^3-b_2^3).
\end{eqnarray*}
 Dividing the second and third identities by the first identity in the above display gives
 \begin{equation}\label{eq:8}
 a_1+a_2=b_1+b_2,\quad a_1^2+a_1a_2+a_2^2=b_1^2+b_1b_2+b_2^2.
 \end{equation}
The equations \eqref{eq:8} are equivalent to
 \begin{equation}
 a_1+a_2=b_1+b_2,\quad  a_1a_2=b_1b_2.
 \end{equation}
This implies that both  $\{a_1,a_2\}$ and $\{b_1,b_2\}$ are  the two roots of the same quadratic equation, i.e., $\{a_1,a_2\} =\{b_1,b_2\}$ . On the other hand, $\{a_1,a_2\}\subseteq I$ and $\{b_1,b_2\}\subseteq J$. This implies that $|I\cap J|\ge 2$ which is a  contradiction.
\end{proof}

\begin{theorem} \label{thm:3.1} Let $r\ge 4$ be an integer.
If there is a binary $(q,m,2r;r+1)$ constant weight code $A$, then there exists an optimal $q$-ary $[n,k,5]$-LRC $C$ with locality $r$, where $n=(r+1)m$ and $k=n-m-3$. Furthermore, $C$ can be explicitly constructed as long as $A$ is explicitly given.
\end{theorem}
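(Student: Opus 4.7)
The plan is to exhibit an $(m+3)\times n$ parity-check matrix $H$ in the block form of Remark \ref{rmk:2}, so that locality $r$ is built in, and then verify that the minimum distance of the resulting code reaches the Singleton-type bound, namely $5$. I would first identify the given binary constant weight code with a family $I_1,\dots,I_m\subseteq \F_q$ of $(r+1)$-subsets satisfying $|I_i\cap I_j|\le 1$ for $i\neq j$, using the equivalence recalled in Section~2.2 (distance $2r$ on weight $r+1$ codes corresponds to pairwise intersection at most $1$). Write $I_i=\{a_{i,1},\dots,a_{i,r+1}\}\subseteq \F_q$. The columns of $H$ are grouped into $m$ blocks of size $r+1$; in block $i$, the $j$-th column is $(\be_i^\top,\;a_{i,j},\;a_{i,j}^2,\;a_{i,j}^3)^\top$, where $\be_i\in\F_q^m$ is the $i$-th standard basis vector. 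This matches the shape required by Remark \ref{rmk:2}, so the resulting linear code $C$ automatically has locality $r$.

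I would next check $\dim C=n-m-3$ by showing $H$ has rank $m+3$. The first $m$ rows are obviously independent. Restricting any linear relation $\alpha_1 r_1+\cdots+\alpha_m r_m+\beta_1 r_{m+1}+\beta_2 r_{m+2}+\beta_3 r_{m+3}=\bo$ to the columns of a single block $i$ forces the polynomial $\alpha_i+\beta_1 x+\beta_2 x^2+\beta_3 x^3$ to vanish at the $r+1\ge 5$ distinct elements of $I_i$, so it is the zero polynomial, killing every $\alpha_i$ and every $\beta_s$.

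The heart of the argument is showing $d(C)\ge 5$, i.e.\ that any four columns of $H$ are $\F_q$-linearly independent. I would split by how the four columns are distributed across blocks: $1+1+1+1$, $2+1+1$, $3+1$, $4+0$, and $2+2$. Any split containing a singleton block is immediate: the unique nonzero entry that singleton contributes in the top $m$ rows forces its coefficient to vanish, and one recurses on the remaining columns. The $4+0$ case reduces to showing that four of the vectors $(1,a,a^2,a^3)^\top$ with distinct $a$ are independent, which is a $4\times 4$ Vandermonde. The only delicate case is $2+2$, where two columns come from block $i$ and two from block $j$; this is exactly what Lemma \ref{lem:3.0} establishes, and it is precisely there that the hypothesis $|I_i\cap I_j|\le 1$ is used in an essential way (via the identities $a_1+a_2=b_1+b_2$ and $a_1 a_2=b_1 b_2$ forcing $\{a_1,a_2\}=\{b_1,b_2\}\subseteq I\cap J$).

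Finally, optimality is a one-line Singleton check. With $n=(r+1)m$ and $k=rm-3$, one has $\lceil k/r\rceil=m$ whenever $r\ge 4$, so the bound \eqref{eq:x1} reads $d\le (m+3)-m+2=5$, matching the lower bound just proved; hence $C$ is an optimal LRC. The main obstacle in the whole argument is the $2+2$ case, which is precisely where the constant-weight/Steiner-system hypothesis has to be spent, and Lemma \ref{lem:3.0} has already disposed of it.
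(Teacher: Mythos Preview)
Your proposal is correct and follows essentially the same approach as the paper: the same parity-check matrix $H$, the same five-way case split on how four columns distribute among blocks, and the same appeal to Lemma~\ref{lem:3.0} for the $2+2$ case. Your direct verification that $H$ has full rank $m+3$ (via the degree-$3$ polynomial vanishing at $r+1\ge 5$ points) is in fact cleaner than the paper's treatment, which only observes $\dim C\ge k$ and then says one may ``increase rows of $H$'' if necessary.
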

\begin{proof}
As there is  a binary $(q,m,2r;r+1)$ constant weight code $A$, one has a family $\{I_i\}_{i=1}^m$ of subsets of $\F_q$ such that $|I_i|=r+1$ and $|I_j\cap I_j|\le 1$ for all $1\le i\neq j\le m$. Label  elements of $I_i$ by $\{\Ga_{i1},\Ga_{i2},\dots,\Ga_{i,r+1}\}$.
Define the following $3\times (r+1)$ matrices
\begin{equation}
D_i=\left(\begin{array}{cccc}
\Ga_{i1}&\Ga_{i2}&\cdots&\Ga_{i,r+1}\\
\Ga_{i1}^2&\Ga_{i2}^2&\cdots&\Ga_{i,r+1}^2\\
\Ga_{i1}^3&\Ga_{i2}^3&\cdots&\Ga_{i,r+1}^3
\end{array}
\right)
\end{equation}
for $i=1,2,\dots,m$.
Now we define a $(3+m)\times n$ matrix
\begin{equation}\label{eq:} H=\left(\begin{array}{c|c|c|c}
\bi&\bo&\cdots&\bo\\ \hline
\bo&\bi&\cdots&\bo \\ \hline
\vdots&\vdots&\ddots&\vdots \\ \hline
\bo&\bo&\cdots&\bi \\ \hline
D_1&D_2&\cdots&D_m
\end{array}
\right),
\end{equation}
where $\bi$ and $\bo$ stand for the all-one vector and the zero vector of length $r+1$, respectively.

We claim that the $q$-ary linear code $C$ with $H$ as a parity-check matrix is the desired optimal $q$-ary $[n,k,5]$-LRC with locality $r$. Length and locality are clear. The dimension of $C$ is at least $n-m-3=k$. Thus, we may assume that  the dimension of $C$ is  $k$ (otherwise one can increase rows of $H$ if the dimension of $C$ is less than $k$).  By the Singleton bound, the minimum distance is upper bounded by
\[d\le  n-k-\left\lceil\frac kr\right\rceil+2=5.\]
Thus, it is sufficient to show that the minimum distance is at least $5$, i.e., any four columns of $H$ are $\F_q$-linearly independent.  We are going to prove $d=5$ in five cases.

Note that every column of $H$ can be indexed by a pair $(i,j)$ for $1\le i\le m$ and $1\le j\le r+1$ with the $(i,j)$th column $\bh_{ij}=(0,\cdots,0,1,0,\cdots,0,\Ga_{ij},\Ga_{ij}^2,\Ga_{ij}^3)^T$, where $1$ is located at position $i$. We say that $\bh_{ij}$ and $\bh_{ts}$ belong to the same block if $i=t$.

Consider the four columns $\bh_{i_1,j_1}$, $\bh_{i_2,j_2}$, $\bh_{i_3,j_3}$ and $\bh_{i_4,j_4}$ and define the matrix $D=( \bh_{i_1,j_1}, \bh_{i_2,j_2}, $ $\bh_{i_3,j_3},\bh_{i_4,j_4})$.

{\it Case (i)} $\{\bh_{i_t,j_t}\}_{t=1}^4$ belong to the same block, i.e, $i_1=i_2=i_3=i_4$. Then it is clear that they are $\F_q$-linearly independent as  the $i_1$th row together with the last three rows of $D$ forms a Vandermanond matrix. Hence, they are $\F_q$-linearly independent.

{\it Case (ii)}  $\{\bh_{i_t,j_t}\}_{t=1}^4$ belong to four distinct blocks. Then they are $\F_q$-linearly independent as  rows $i_1$ to $i_4$ of $D$ from the $4\times 4$ identity matrix.

{\it Case (iii)}  $\{\bh_{i_t,j_t}\}_{t=1}^4$ belong to three distinct blocks. Without loss of generality, we may assume that $1\le i_1=i_2<i_3<i_4\le m$. Suppose that $\Gl_t\in\F_q$ such that $\sum_{t=1}^4\Gl_t\bh_{i_t,j_t}=\bo$. Then positions $i_3$ and $i_4$ of the column vector $\sum_{t=1}^4\Gl_t\bh_{i_t,j_t}$ are equal to $\Gl_3$ and $\Gl_4$, respectively. Hence, $\Gl_3=\Gl_4=0$. Thus, we have $\Gl_1\bh_{i_1,j_1}+\Gl_2\bh_{i_2,j_2}=\bo$. This implies that $\Gl_1=\Gl_2=0$ as $\bh_{i_1,j_1}$ and $\bh_{i_2,j_2}$ belong to the same block and hence are linearly independent by Case (i).

{\it Case (iv)}  Three of $\{\bh_{i_t,j_t}\}_{t=1}^4$ belong to the same block and the other one lies in a different block. Without loss of generality, we may assume that $1\le i_1=i_2=i_3<i_4\le m$. Suppose that $\Gl_t\in\F_q$ such that $\sum_{t=1}^4\Gl_t\bh_{i_t,j_t}=\bo$. Then position  $i_4$ of  the column vector $\sum_{t=1}^4\Gl_t\bh_{i_t,j_t}$ is equal to  $\Gl_4$. Hence, $\Gl_4=0$. Thus, we have $\sum_{t=1}^3\Gl_t\bh_{i_t,j_t}=\bo$. This implies that $\Gl_1=\Gl_2=\Gl_3=0$ as $\bh_{i_1,j_1}$, $\bh_{i_2,j_2}$ and $\bh_{i_3,j_3}$ belong to the same block are hence are linearly independent by Case (i).

{\it Case (v)}  Two of $\{\bh_{i_t,j_t}\}_{t=1}^4$ belong to one block and the other two  lie in a different block. Then they are also linearly independent by Lemma \ref{lem:3.0}.\end{proof}

Theorem \ref{thm:3.1} shows that one can construct an LRC  with distance 5  as long as there exists a constant-weight code with the required parameters. Now we are going to give another construction of LRC  with distance 6. Similarly, we present the following lemma first which is crucial for the construction.

Again let $I$ and $J$ be two subsets of $\F_q$ with size $r+1$ such that $|I\cap J|\le 1$. Denote $I=\{a_1,\cdots,a_{r+1}\}$ and $J=\{b_1,\cdots,b_{r+1}\}$.
Define the following two matrices over $\F_q$.
{\begin{equation}
A'=\left(\begin{array}{cccc}
1&1&\cdots&1\\
0&0&\cdots&0\\
a_1&a_2&\cdots&a_{r+1}\\
a_1^2&a_2^2&\cdots&a_{r+1}^2\\
a_1^3&a_2^3&\cdots&a_{r+1}^3\\
a_1^4&a_2^4&\cdots&a_{r+1}^4\\
\end{array}
\right)
\quad \quad
B'=\left(\begin{array}{cccc}
0&0&\cdots&0\\
1&1&\cdots&1\\
b_1&b_2&\cdots&b_{r+1}\\
b_1^2&b_2^2&\cdots&b_{r+1}^2\\
b_1^3&b_2^3&\cdots&b_{r+1}^3\\
b_1^4&b_2^4&\cdots&b_{r+1}^4\\
\end{array}
\right).
\end{equation}}
Denote by $A'=[\a_1',\cdots, \a_{r+1}']$ and  $B'=[\b_1',\cdots, \b_{r+1}']$ where $\a_i', \b_i'$ are column vectors.
\begin{lemma}\label{lem:3.1} Let $A',B'$ be the two matrices defined above. If $q$ is a power of $2$, then
any five column vectors consisting of three columns from  $A'$ and the other two columns from $B'$ are linearly independent.
\end{lemma}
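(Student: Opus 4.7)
My plan is to mimic the ``same-quadratic'' end-game from Lemma~3.0, but with the Frobenius map (available in characteristic~$2$) and the Moore-determinant tool of Section~2.3 doing the bookkeeping that Lemma~3.0's explicit computation did.

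Suppose there is a nontrivial relation $\lambda_1\a_1'+\lambda_2\a_2'+\lambda_3\a_3'+\mu_1\b_1'+\mu_2\b_2'=\bo$. The first two coordinates give $\lambda_1+\lambda_2+\lambda_3=0$ and $\mu_1+\mu_2=0$, so in characteristic~$2$ I may write $\lambda_3=\lambda_1+\lambda_2$ and $\mu_1=\mu_2=:\mu$, reducing to three unknowns. Set $u=a_1+a_3$, $v=a_2+a_3$, $w=b_1+b_2$; all three are nonzero because the elements within each of $I,J$ are distinct. Using $x^{2^i}+y^{2^i}=(x+y)^{2^i}$, the coordinate equations at the powers $k=1,2,4$ collapse to the $3\times 3$ Moore system
\[ \lambda_1 u^{2^i}+\lambda_2 v^{2^i}+\mu w^{2^i}=0,\qquad i=0,1,2. \]
By Section~2.3 the determinant vanishes iff $u,v,w$ are $\F_2$-linearly dependent; so if they are $\F_2$-linearly independent then $\lambda_1=\lambda_2=\mu=0$ and the first two coordinate equations force everything else to vanish as well.

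It remains to handle the case where $u,v,w$ are $\F_2$-dependent. Since each is nonzero, the only possibilities up to symmetry are $u=v$, $u=w$, $v=w$, or $u+v+w=0$. The first alternative forces $a_1=a_2$, which is impossible. In each of the remaining three cases I run the Moore argument on a $2\times 2$ subsystem in the two $\F_2$-independent forms that survive, pinning the unknowns down to a single scalar $\mu$. Substituting into the unused $k=3$ equation and expanding via $x^3+y^3=(x+y)\bigl((x+y)^2+xy\bigr)$ reduces it to an equation of the form $\mu\cdot t\cdot(a_ia_j+b_1b_2)=0$, where $t\in\{u,v,w\}$ is nonzero and $\{a_i,a_j\}\subset I$ is a pair sharing its sum with $\{b_1,b_2\}$. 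If $a_ia_j=b_1b_2$ then $\{a_i,a_j\}$ and $\{b_1,b_2\}$ are the two roots of a common quadratic, so they coincide, violating $|I\cap J|\le 1$; hence $\mu=0$ and the whole combination is trivial.

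The main obstacle is the symmetric case $u+v+w=0$, where all three forms play equal roles and the cubic expansion is the heaviest; the characteristic-$2$ identity $u^3+v^3+w^3=uvw$ (valid whenever $u+v+w=0$) is what makes the computation collapse cleanly into the ``same-quadratic'' shape, at which point the same contradiction with $|I\cap J|\le 1$ closes the argument.
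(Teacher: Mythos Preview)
Your argument is correct and follows essentially the same route as the paper: both reduce, via the first two rows, to a relation among the three ``difference'' elements $a_1+a_2,\,a_1+a_3,\,b_1+b_2$ (your $u,v,w$ up to relabelling), use the Moore determinant at the exponents $1,2,4$ to force an $\F_2$-dependence among them, and then read off the product identity $a_ia_j=b_1b_2$ from the cubic row to finish with the same-quadratic contradiction against $|I\cap J|\le 1$. The only cosmetic difference is that the paper phrases the Moore step as a rank-$\le 2$ argument on the $4\times 3$ matrix of column differences (and disposes of vanishing $\lambda_t$ by citing earlier cases), whereas you work directly and uniformly with the $3\times 3$ coefficient system; your ``symmetric'' case $u+v+w=0$ is exactly the paper's WLOG case $b_1+b_2=a_1+a_2$.
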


\begin{proof}
Without loss of generality, we only need to prove $\a_1', \a_2', \a_3', \b_1', \b_2'$ are linearly independent.

Suppose that $\Gl_t\in\F_q$ such that $\Gl_1\a_1'+\Gl_2 \a_2'+\Gl_3\a_3'+\Gl_4\b_1'+\Gl_5\b_2'=\bo$.
If one of $\Gl_t$ is $0$, then proof  is reduced to that of  case (iv) or (v) of Theorem \ref{thm:3.1}. Now suppose that none of $\Gl_t$ were $0$.
By considering the first and second coordinates of $\Gl_1\a_1'+\Gl_2 \a_2'+\Gl_3\a_3'+\Gl_4\b_1'+\Gl_5\b_2'$, we have $\Gl_1+\Gl_2+\Gl_3=\Gl_4+\Gl_5=0$.
Let $\Gl_2=a$, $\Gl_3=b$ and $\Gl_4=c$. Then $\Gl_1=a+b$ and $\Gl_5=c$.

Thus, the identity $\Gl_1\a_1'+\Gl_2 \a_2'+\Gl_3\a_3'+\Gl_4\b_1'+\Gl_5\b_2'=\bo$ becomes
$a(\a_1'+\a_2')+b(\a_1'+\a_3')=c(\b_1'+\b_2')$. Therefore, $\a_1'+\a_2'$, $\a_1'+\a_3'$ and $\b_1'+\b_2'$ are $\F_q$-linearly dependent. This implies that the following matrix
 \[B=\left(\begin{array}{ccc}
a_1+a_2& a_1+a_3& b_1+b_2\\
a_1^2+a_2^2& a_1^2+a_3^2& b_1^2+b_2^2\\
a_1^3+a_2^3& a_1^3+a_3^3& b_1^3+b_2^3\\
a_1^4+a_2^4& a_1^4+a_3^4& b_1^4+b_2^4\\
 \end{array}
 \right)\]
 consisting of the last four positions of these three vectors $\a_1'+\a_2'$, $\a_1'+\a_3'$ and $\b_1'+\b_2'$ has rank at most $2$. Thus, the $3\times 3$ submatrix of $B$ consisting of rows $1$, $2$ and $4$   has rank at most $2$ as well. Since this submatrix is a Moore matrix, $a_1+a_2$, $a_1+a_3$ and $b_1+b_2$ are $\F_2$-linearly dependent. This implies that $b_1+b_2$  is equal to $a_1+a_2$, $a_1+a_3$ or $a_2+a_3$. Without loss of generality, we may assume that  $b_1+b_2$  is equal to $a_1+a_2$. Subtracting the third column by the first column of $B$, one gets a matrix
 \[B_1=\left(\begin{array}{ccc}
 a_1+a_2& a_1+a_3&0\\
a_1^2+a_2^2& a_1^2+a_3^2& 0\\
a_1^3+a_2^3& a_1^3+a_3^3& (a_1+a_2)(a_1a_2+b_1b_2)\\
a_1^4+a_2^4& a_1^4+a_3^4& 0\\
 \end{array}
 \right).\]
 Note that the elements at entry $(3,3) $ of $B_1$ is
 \[b_1^3+b_2^3-(a_1^3+a_2^3)=b_1^3+b_2^3+a_1^3+a_2^3=(a_1+a_2)^3+(b_1+b_2)^3+(a_1+a_2)(a_1a_2+b_1b_2)=(a_1+a_2)(a_1a_2+b_1b_2).\]
As the submatrix of $B_1$ consisting of rows $1$ and $2$ and columns $1$ and $2$ is a $2\times 2$ Moore matrix and $a_1+a_2$, $a_1+a_3$  are $\F_2$-linearly independent, the first two rows of $B_1$ are $\F_q$-linearly independent. This forces that $(a_1+a_2)(a_1a_2+b_1b_2)=0$, i.e., $a_1a_2=b_1b_2$. Combining with the fact that $b_1+b_2= a_1+a_2$, we must have $\{a_1,a_2\}=\{b_1,b_2\}$. This is a contradiction since $|I\cap J|\le 1$.
\end{proof}

\begin{theorem} \label{thm:3.2} Let $r\ge 5$ be an integer and let $q$ be a power of $2$.
If there is a binary $(q,m,2r; r+1)$ constant weight code $A$, then there exists an optimal $q$-ary $[n,k,6]$-LRC $C$ with locality $r$, where $n=(r+1)m$ and $k=n-m-4$. Furthermore, $C$ can be explicitly constructed as long as $A$ is explicitly given.
\end{theorem}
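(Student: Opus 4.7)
The plan is to follow the blueprint of Theorem~\ref{thm:3.1}, with the power rows extended to $\Ga,\Ga^2,\Ga^3,\Ga^4$ and Lemma~\ref{lem:3.0} replaced by its distance-$6$ analogue Lemma~\ref{lem:3.1}. Concretely, from each $I_i=\{\Ga_{i1},\ldots,\Ga_{i,r+1}\}$ I would form the $4\times(r+1)$ matrix
\[
D_i=\begin{pmatrix}\Ga_{i1}&\cdots&\Ga_{i,r+1}\\ \Ga_{i1}^2&\cdots&\Ga_{i,r+1}^2\\ \Ga_{i1}^3&\cdots&\Ga_{i,r+1}^3\\ \Ga_{i1}^4&\cdots&\Ga_{i,r+1}^4\end{pmatrix}
\]
and assemble the $(m+4)\times n$ parity-check matrix $H$ in the block shape of Remark~\ref{rmk:2}. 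The resulting code $C$ then has length $n=(r+1)m$, locality $r$ (by Remark~\ref{rmk:2}), and dimension at least $k=n-m-4$. Plugging into the Singleton-type bound gives $d\le 6$, so the heart of the argument is to show that any five columns of $H$ are $\F_q$-linearly independent.

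I would classify the five chosen columns $\bh_{i_1,j_1},\ldots,\bh_{i_5,j_5}$ by the multiset of block indices, giving the profiles $(5)$, $(4,1)$, $(3,2)$, $(3,1,1)$, $(2,2,1)$, $(2,1,1,1)$ and $(1,1,1,1,1)$. Whenever a block appears only once, the row of $H$ indexed by that block has a $1$ in exactly one of the chosen columns, so any dependence forces the corresponding scalar to be $0$. Iterating this reduction reveals that only the profiles $(5)$, $(2,2)$ and $(3,2)$ actually need separate analysis. The profile $(5)$ is handled by the $5\times 5$ augmented Vandermonde submatrix formed by the block row together with the four power rows, which is invertible since the $r+1\ge 6$ entries of the block are distinct. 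The profile $(2,2)$ is exactly Lemma~\ref{lem:3.0}: that lemma uses only the rows $\Ga,\Ga^2,\Ga^3$, so including the extra $\Ga^4$ row can only strengthen linear independence.

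The genuinely new case, and the place where the hypothesis $\chr\F_q=2$ is actually used, is the profile $(3,2)$ of three columns from one block and two from another. This is precisely the content of Lemma~\ref{lem:3.1}. I therefore expect the main obstacle to be entirely encapsulated in Lemma~\ref{lem:3.1}; the rest of the theorem is routine bookkeeping built on the Vandermonde and identity-row reductions above, together with the dimension and locality observations supplied by Remark~\ref{rmk:2}.
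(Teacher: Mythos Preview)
Your proposal is correct and follows essentially the same route as the paper: the same parity-check matrix $H$ built from the $4\times(r+1)$ blocks $D_i$, the same locality/dimension/Singleton-bound bookkeeping via Remark~\ref{rmk:2}, and the same case split on how the five columns distribute among blocks, with the $(3,2)$ profile handled by Lemma~\ref{lem:3.1}. Your reduction principle (a block appearing once forces its coefficient to vanish) is exactly what the paper invokes when it says cases (i)--(v) go ``by the similar arguments as in Theorem~\ref{thm:3.1}'', and your treatment of the residual $(2,2)$ profile via Lemma~\ref{lem:3.0} is the same content the paper packages inside the proof of Lemma~\ref{lem:3.1}.
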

\begin{proof} Again, let $\{I_i\}_{i=1}^m$  be a family of subsets of $\F_q$ such that $|I_i|=r+1$ and $|I_j\cap I_j|\le 1$ for all $1\le i\neq j\le m$. Label  elements of $I_i$ by $\{\Ga_{i1},\Ga_{i2},\dots,\Ga_{i,r+1}\}$.
Define the following $4\times (r+1)$ matrices
\begin{equation}
D_i=\left(\begin{array}{cccc}
\Ga_{i1}&\Ga_{i2}&\cdots&\Ga_{i,r+1}\\
\Ga_{i1}^2&\Ga_{i2}^2&\cdots&\Ga_{i,r+1}^2\\
\Ga_{i1}^3&\Ga_{i2}^3&\cdots&\Ga_{i,r+1}^3\\
\Ga_{i1}^4&\Ga_{i2}^4&\cdots&\Ga_{i,r+1}^4
\end{array}
\right)
\end{equation}
for $i=1,2,\dots,m$.
Now we define a $(4+m)\times n$ matrix
\begin{equation}\label{eq:} H=\left(\begin{array}{c|c|c|c}
\bi&\bo&\cdots&\bo\\ \hline
\bo&\bi&\cdots&\bo \\ \hline
\vdots&\vdots&\ddots&\vdots \\ \hline
\bo&\bo&\cdots&\bi \\ \hline
D_1&D_2&\cdots&D_m
\end{array}
\right),
\end{equation}
where $\bi$ and $\bo$ stand for the all-one vector and the zero vector of length $r+1$, respectively.

We claim that the $q$-ary linear code $C$ with $H$ as a parity-check matrix is the desired optimal $q$-ary $[n,k,6]$-LRC with locality $r$. Length and locality are clear. The dimension of $C$ is at least $n-m-4=k$. Thus, we may assume that  the dimension of $C$ is  $k$ (otherwise one can increase rows of $H$ if the dimension of $C$ is less than $k$).  By the Singleton bound, the minimum distance is upper bounded by
\[d\le  n-k-\left\lceil\frac kr\right\rceil+2=6.\]
Thus, it is sufficient to show that the minimum distance is at least $6$, i.e., any five columns of $H$ are $\F_q$-linearly independent.

Now every column of $H$ is indexed by a pair $(i,j)$ with $1\le i\le m$ and $1\le j\le r+1$ with the $(i,j)$th column $\bh_{ij}=(0,\cdots,0,1,0,\cdots,0,\Ga_{ij},\Ga_{ij}^2,\Ga_{ij}^3,\Ga_{ij}^4)^T$, where $1$ is located at position $i$.

As in Theorem \ref{thm:3.1}, to show that any five columns of $H$ are $\F_q$-linearly independent, we can discuss different cases. In this theorem, we have the following cases: (i) all five columns belong to the same block; (ii) four columns belong to one block and the remaining column lies in a different block; (iii) five columns lie in three different blocks; (iv) five columns lie in four different blocks; (v) five columns lie in five different blocks; (vi) three columns belong to one block and the remaining two columns lie in anthor block.

For cases (i)-(v), one can prove it by using the similar arguments as in Theorem \ref{thm:3.1}. Case (vi) follows from the results in Lemma \ref{lem:3.1}.

\end{proof}

From Theorems \ref{thm:3.1} and \ref{thm:3.2}, we can see that  one can construct an LRC  with distance 5 or 6 as long as there exists a constant-weight code with the required parameters. The topic of constant-weight codes has been studied a lot and many results have been known. Therefore, we can make use of the known results on the construction of constant-weight codes to produce LRCs via Theorems \ref{thm:3.1} and \ref{thm:3.2}.

Combining Theorem \ref{thm:3.1} with Lemma \ref{lem:2.2} gives the following result immediately.
\begin{cor}\label{cor:3.3} If $r+1\ge 5$ is a prime power, then for any $t\ge 1$ there exists an explicit construction of a family of optimal $q$-ary $[n,k,5]$-LRCs  with locality $r$, where $q=(r+1)^t$, $n=(r+1)m$ and $k=n-m-3$ and $m=\frac{(r+1)^{t-1}((r+1)^t-1)}{r}$. Hence, $n=\frac 1r q(q-1)$.
\end{cor}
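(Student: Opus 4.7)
The plan is to combine Theorem~\ref{thm:3.1} with Lemma~\ref{lem:2.2} directly: Theorem~\ref{thm:3.1} asserts that any binary $(q, m, 2r; r+1)$ constant weight code explicitly produces an optimal $q$-ary $[n, k, 5]$-LRC with locality $r$, $n = (r+1)m$, and $k = n - m - 3$, so the entire task reduces to exhibiting such a constant weight code with the desired cardinality and verifying that the parameters match.

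First I would set $\ell = r + 1$ and $q = \ell^t = (r+1)^t$ in Lemma~\ref{lem:2.2}. Observing that $2\ell - 2 = 2r$ and that the weight parameter $\ell$ equals $r+1$, the triple $(q, 2r, r+1)$ is exactly the one required by Theorem~\ref{thm:3.1}. Since $r+1$ is a prime power by hypothesis, Lemma~\ref{lem:2.2} applies and furnishes an \emph{explicit} binary constant weight code of size
\[ m \;=\; A\bigl((r+1)^t,\, 2r,\, r+1\bigr) \;=\; \frac{(r+1)^{t-1}\bigl((r+1)^t - 1\bigr)}{r}, \]
the construction being given via the Steiner system $S(2, r+1, (r+1)^t)$ arising from the projective geometry over $\F_{r+1}$.

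Next I feed this constant weight code into Theorem~\ref{thm:3.1}. Since $r + 1 \geq 5$ ensures $r \geq 4$, the hypothesis of Theorem~\ref{thm:3.1} is satisfied, and it yields an explicit optimal $q$-ary $[n, k, 5]$-LRC with locality $r$, where $n = (r+1)m$ and $k = n - m - 3$. Substituting the value of $m$,
\[ n \;=\; (r+1)\,m \;=\; \frac{(r+1)^t\bigl((r+1)^t - 1\bigr)}{r} \;=\; \frac{q(q-1)}{r}, \]
which gives the claimed length and implies $N_q(5, r) \geq q(q-1)/r$.

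Because every step is a quotation of results already stated in the paper and the remaining work is routine parameter bookkeeping, I do not anticipate any substantive obstacle; the only point requiring attention is confirming that the weight and distance parameters $(2r, r+1)$ demanded by Theorem~\ref{thm:3.1} coincide with the case $(2\ell - 2, \ell)$ of Lemma~\ref{lem:2.2} under the identification $\ell = r+1$, which is immediate.
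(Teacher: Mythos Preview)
Your proposal is correct and follows exactly the approach the paper indicates: the paper simply states that the corollary follows immediately from combining Theorem~\ref{thm:3.1} with Lemma~\ref{lem:2.2}, and your write-up just spells out the parameter identification $\ell = r+1$ and the arithmetic, all of which is routine and accurate.
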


Combining Theorem \ref{thm:3.1} with Lemma \ref{lem:2.2} gives the following result immediately.
\begin{cor}\label{cor:3.4} If $r+1\ge 8$ is a power of $2$, then for any $t\ge 1$ there exists an explicit construction of a family of optimal $q$-ary $[n,k,6]$-LRCs  with locality $r$, where $q=(r+1)^t$, $n=(r+1)m$ and $k=n-m-4$ and $m=\frac{(r+1)^{t-1}((r+1)^t-1)}{r}$. Hence, $n=\frac 1r q(q-1)$.
\end{cor}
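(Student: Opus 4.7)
The plan is to chain Lemma \ref{lem:2.2} and Theorem \ref{thm:3.2} together and simply verify that the hypotheses line up. First I would set $\ell = r+1$ in Lemma \ref{lem:2.2}; since we assume $r+1 \geq 8$ is a prime power (a power of $2$), the lemma produces, for every $t \geq 1$, an explicit binary constant weight code attaining
\[ A\bigl((r+1)^t,\, 2(r+1)-2,\, r+1\bigr) \;=\; \frac{(r+1)^{t-1}\bigl((r+1)^t-1\bigr)}{r}. \]
Setting $q = (r+1)^t$ and $m$ equal to this quantity, this is precisely a binary $(q, m, 2r; r+1)$ constant weight code, and the construction via the Steiner system arising from projective geometry over $\F_{r+1}$ is fully explicit.

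Next I would check that the hypotheses of Theorem \ref{thm:3.2} are satisfied. We need $r \geq 5$, which holds since $r+1 \geq 8$ forces $r \geq 7$. We also need $q$ to be a power of $2$; since $r+1$ is a power of $2$, so is $q = (r+1)^t$. Therefore Theorem \ref{thm:3.2} applies to the constant weight code from step one and yields an explicit optimal $q$-ary $[n,k,6]$-LRC with locality $r$, where $n = (r+1)m$ and $k = n-m-4$.

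Finally I would record the length in closed form:
\[ n \;=\; (r+1)\,m \;=\; (r+1)\cdot\frac{(r+1)^{t-1}\bigl((r+1)^t-1\bigr)}{r} \;=\; \frac{(r+1)^t\bigl((r+1)^t-1\bigr)}{r} \;=\; \frac{q(q-1)}{r}, \]
which is the claimed value. Since both Lemma \ref{lem:2.2} and Theorem \ref{thm:3.2} are explicit (the first via projective geometry, the second via an explicit parity-check matrix built from the subsets $I_i$), the resulting LRC family is explicit as well. There is no real obstacle here beyond bookkeeping: the corollary is purely a concatenation of the two previously established constructions, and the only substantive check is that the constraint ``$q$ a power of $2$'' required by Theorem \ref{thm:3.2} is inherited from the constraint ``$r+1$ a power of $2$'' imposed in the corollary's hypothesis.
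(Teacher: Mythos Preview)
Your proposal is correct and follows exactly the approach the paper intends: combine Lemma~\ref{lem:2.2} (with $\ell=r+1$) to obtain the explicit $(q,m,2r;r+1)$ constant weight code, then feed it into Theorem~\ref{thm:3.2}, checking along the way that $r\ge 5$ and that $q=(r+1)^t$ is a power of~$2$. In fact, the paper's one-line justification before Corollary~\ref{cor:3.4} appears to contain a typo (it cites Theorem~\ref{thm:3.1} rather than Theorem~\ref{thm:3.2}); you have correctly identified and used the right ingredient.
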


Combining Theorem \ref{thm:3.2}  with Lemma \ref{lem:2.3} gives the following result immediately.
\begin{cor}\label{cor:3.5} For $4\le r\le q-1$ with $(r+1)|n$,  there exists an explicit construction of an optimal $q$-ary $[n,k,5]$-LRC  with locality $r$, where  $k=n-\frac{n}{r+1}-3$ and $n=\Omega_r(q^2)$.
\end{cor}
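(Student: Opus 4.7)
The plan is to obtain this corollary as a direct combination of Theorem~\ref{thm:3.1} with the explicit lower bound on $A(q,2r,r+1)$ provided by Lemma~\ref{lem:2.3}; there is no new construction or argument needed beyond checking that the parameters align.

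First, I would apply Lemma~\ref{lem:2.3} with the choice $\delta=r$ and $w=r+1$. Since $q$ is a prime power and $r\ge 4\ge 3$, the lemma supplies a binary $(q,m,2r;r+1)$ constant-weight code $\mA$ with
\[
m \;:=\; A(q,2r,r+1) \;\ge\; \frac{\binom{q}{r+1}}{q^{r-1}-1}.
\]
Moreover, since Lemma~\ref{lem:2.3} comes from the explicit construction in \cite{XING}, the code $\mA$ itself is given explicitly, which will be essential for the final LRC to be explicit as well.

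Next, I would feed $\mA$ into Theorem~\ref{thm:3.1}. The hypothesis $r\ge 4$ of that theorem is satisfied, so we obtain an optimal $q$-ary $[n,k,5]$-LRC with locality $r$, length $n=(r+1)m$, and dimension $k = n-m-3 = n-\tfrac{n}{r+1}-3$, constructed explicitly from the parity-check matrix in the proof of Theorem~\ref{thm:3.1}. In particular the divisibility condition $(r+1)\mid n$ is automatic from $n=(r+1)m$, so the stated range of admissible $n$ is attained.

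Finally, I would verify the asymptotic claim $n=\Omega_r(q^2)$. Using the bound on $m$ above,
\[
n \;=\; (r+1)m \;\ge\; (r+1)\cdot\frac{\binom{q}{r+1}}{q^{r-1}-1}.
\]
The numerator $\binom{q}{r+1}=q(q-1)\cdots(q-r)/(r+1)!$ is a polynomial in $q$ of degree $r+1$ with leading coefficient $1/(r+1)!$, while the denominator $q^{r-1}-1$ has degree $r-1$, so the ratio grows like $q^2$ with an implied constant depending only on $r$. Hence $n=\Omega_r(q^2)$, completing the corollary. There is effectively no obstacle: the only things to check are the parameter constraints (satisfied because $r\ge 4$) and that explicitness propagates from $\mA$ through Theorem~\ref{thm:3.1}, which it clearly does since the parity-check matrix is written down in closed form from the sets $\{I_i\}$.
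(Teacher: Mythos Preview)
Your proposal is correct and is exactly the approach the paper takes: the corollary is stated as an immediate consequence of combining Theorem~\ref{thm:3.1} with Lemma~\ref{lem:2.3} (the paper's lead-in to Corollary~\ref{cor:3.5} actually cites Theorem~\ref{thm:3.2}, but that is a typographical slip---distance~$5$ requires Theorem~\ref{thm:3.1}, as you correctly use). Your verification of the parameter constraints, the divisibility $(r+1)\mid n$, and the asymptotic $n=\Omega_r(q^2)$ simply spells out what the paper leaves implicit.
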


Combining Theorem \ref{thm:3.2}  with Lemma \ref{lem:2.3} gives the following result immediately.
\begin{cor}\label{cor:3.6} Let $q$ be a power of $2$. For $4\le r\le q-1$ with $(r+1)|n$,  there exists an explicit construction of an optimal $q$-ary $[n,k,6]$-LRC  with locality $r$, where  $k=n-\frac{n}{r+1}-4$ and $n=\Omega_r(q^2)$.
\end{cor}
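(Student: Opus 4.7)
The plan is to obtain Corollary~\ref{cor:3.6} as a direct application of Theorem~\ref{thm:3.2} with the constant weight code supplied by Lemma~\ref{lem:2.3}. The machinery needed for the construction --- the block parity-check matrix built from powers up to the fourth of elements in sets $I_i\subseteq \F_q$ with pairwise intersection at most $1$, together with the column-independence analysis --- has already been done inside the proof of Theorem~\ref{thm:3.2} and relies crucially on $q$ being of characteristic $2$ (via Lemma~\ref{lem:3.1} and the Moore-matrix identity $(x+y)^2=x^2+y^2$). So this corollary is really just a bookkeeping step on the parameters.

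Concretely, I would proceed in two steps. First, invoke Lemma~\ref{lem:2.3} with $\delta=r$ and $w=r+1$: since $q$ is in particular a prime power and $r\ge 4\ge 3$, this yields a binary constant weight code of length $q$, weight $r+1$, and minimum distance $\ge 2r$, containing at least $m_0:=\binom{q}{r+1}/(q^{r-1}-1)$ codewords, and for fixed $r$ we have $m_0=\Omega_r(q^2)$. Second, feed this code into Theorem~\ref{thm:3.2}, which is applicable because $q$ is a power of $2$. The theorem outputs an optimal $q$-ary $[n',k',6]$-LRC with locality $r$, where $n'=(r+1)m$, $k'=n'-m-4=n'-\tfrac{n'}{r+1}-4$, for any $m\le A(q,2r,r+1)$; in particular we may take $m$ as large as $m_0$, giving $n'\ge (r+1)m_0=\Omega_r(q^2)$.

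For the statement as written, namely an optimal LRC of \emph{given} length $n$ with $(r+1)\mid n$, one simply sets $n=(r+1)m$ for any $m\le A(q,2r,r+1)$ and observes that the hypotheses of Theorem~\ref{thm:3.2} hold for the corresponding sub-family of $m$ sets. The divisibility constraint $(r+1)\mid n$ is automatic from the construction, and the dimension formula $k=n-n/(r+1)-4$ matches.

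The only potential obstacle is a mild matching of hypotheses: Theorem~\ref{thm:3.2} is stated for $r\ge 5$, while the corollary allows $r\ge 4$. The clean resolution is to verify that the proof of Theorem~\ref{thm:3.2} (Cases (i)--(vi) and Lemma~\ref{lem:3.1}) still goes through for $r=4$, since none of those arguments actually uses $r\ge 5$ beyond ensuring $|I_i|=r+1\ge 5$ so that enough columns exist within a single block; alternatively, one restricts to $r\ge 5$ with no loss in the asymptotic claim $n=\Omega_r(q^2)$. Beyond this cosmetic point, there is nothing further to prove.
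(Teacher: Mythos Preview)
Your proposal is correct and follows exactly the paper's approach: the paper simply states that Corollary~\ref{cor:3.6} results immediately from combining Theorem~\ref{thm:3.2} with Lemma~\ref{lem:2.3}, which is precisely the two-step argument you give. Your observation about the $r\ge 5$ versus $r\ge 4$ discrepancy is a valid point that the paper does not explicitly address, and your suggested resolutions are both reasonable.
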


By applying Theorems \ref{thm:3.1} and \ref{thm:3.2} to various other binary constant weight, one could obtain more optimal LRCs with distance $d=5$ or $6$.

\end{document}